\documentclass[runningheads]{llncs}

\usepackage{amsbsy}
\usepackage{amssymb}
\usepackage{amsmath}
\usepackage{mathtools}
\usepackage{booktabs}
\usepackage{graphicx}
\usepackage{hyperref}
\usepackage{wrapfig}
\usepackage{enumitem}
\usepackage{multirow}
\usepackage{scrextend}
\usepackage{algorithm}
\usepackage[english]{babel}
\usepackage[utf8]{inputenc}
\usepackage[noend]{algpseudocode}

\usepackage{tikz}
\usetikzlibrary{patterns,quotes,angles}

\setlength{\belowcaptionskip}{-10pt}

\begin{document}
\title{On Singleton Congestion Games with\\Resilience Against Collusion\thanks{This work is supported by The Scientific and Technological
Research Council of Turkey (TÜBİTAK) through grant 118E126.}}

\titlerunning{On Singleton Congestion Games with Resilience Against Collusion}


\author{
Bugra Caskurlu\inst{1} \and
\"{O}zg\"{u}n Ekici\inst{2}
\and Fatih Erdem Kizilkaya\inst{1}}

\authorrunning{B. Caskurlu \and \"{O}. Ekici \and F. E. Kizilkaya}

\institute{
TOBB University of Economics and Technology, Ankara, Turkey\\
\email{bcaskurlu@etu.edu.tr},
\email{f.kizilkaya@etu.edu.tr}\\
\and
\"{O}zye\u{g}in University, Istanbul, Turkey\\
\email{ozgun.ekici@ozyegin.edu.tr}}

\maketitle

\begin{abstract}
We study the subclass of singleton congestion games with identical and increasing cost functions, i.e., each agent tries to utilize from the least crowded resource in her accessible subset of resources. Our main contribution is a novel approach for proving the existence of equilibrium outcomes that are resilient to weakly improving deviations: $(i)$ by singletons (Nash equilibria), $(ii)$ by the grand coalition (Pareto efficiency), and $(iii)$ by coalitions with respect to an \textit{a priori} given partition coalition structure (partition equilibria). To the best of our knowledge, this is the strongest existence guarantee in the literature of congestion games that is resilient to weakly improving deviations by coalitions.
\end{abstract}

\section{Introduction}
Game forms are useful mathematical abstractions to analyze behavior in real-life multi-agent systems. The use of game forms in analyzing multi-agent systems can be traced back to the seminal work of Wardrop \cite{Wardrop}, who modeled traffic flow in transportation networks using a game form. We follow this line of research by studying a game form which is motivated by navigational multi-agent systems. We namely study a subclass of singleton congestion games \cite{SCG}, where all cost functions of resources are identical (and increasing), which we refer to as \textit{identical singleton congestion games} (ISCGs). An ISCG is a simple game form in which there are $n$ agents and $m$ resources such that each agent tries to utilize from the least crowded resource in her accessible subset of resources.

This game form captures the spirit of agent interactions in real-life scenarios in which there is collision or interference, as is typical in many domains. For instance, in the field of multi-agent navigation, collision avoidance is a fundamental issue, due to the need for autonomous control of potentially a large number of robots running in the same workspace; see \cite{CollisionAvoidance}. Figure \ref{figure:CollisionAvoidance} below presents a motivating toy example, in which robots navigate in a $2$-dimensional workspace by moving to a neighboring cell at each iteration. Under the natural assumption that collision is more likely in a cell with a higher number of robots, robots may be thought of as playing an ISCG at each single iteration.

\begin{figure}[h]
\begin{center}
\scalebox{0.66}{
\begin{tikzpicture}
\draw[very thick] (0, 0) -- (8, 0);
\draw[very thick] (0, 8) -- (8, 8);
\draw[very thick] (0, 0) -- (0, 8);
\draw[very thick] (8, 0) -- (8, 8);

\draw[dotted] (0, 2) -- (8, 2);
\draw[dotted] (0, 4) -- (8, 4);
\draw[dotted] (0, 6) -- (8, 6);
\draw[dotted] (2, 0) -- (2, 8);
\draw[dotted] (4, 0) -- (4, 8);
\draw[dotted] (6, 0) -- (6, 8);

\draw[very thick] (4, 8) -- (4, 4);
\draw[very thick] (4, 4) -- (6, 4);
\draw[very thick] (0, 2) -- (2, 2);
\draw[very thick] (2, 2) -- (2, 6);
\draw[very thick] (6, 0) -- (6, 2);

\fill[black] (1, 7.4) circle (2pt);
\fill[black] (0.5, 6.6) circle (2pt);
\fill[black] (1.5, 6.6) circle (2pt);

\fill[black] (4.6, 0.6) circle (2pt);
\fill[black] (4.6, 1.4) circle (2pt);
\fill[black] (5.4, 0.6) circle (2pt);
\fill[black] (5.4, 1.4) circle (2pt);

\fill[black] (4.6, 5.4) circle (2pt);
\fill[black] (5.4, 4.6) circle (2pt);

\fill[black] (7, 7) circle (2pt);

\fill[black] (0.6, 1.4) circle (2pt);
\fill[black] (1.4, 0.6) circle (2pt);
\end{tikzpicture}}
\end{center}
\caption{A simple example of a multi-agent navigation problem where dots represent the robots and bold lines represent obstacles.}
\label{figure:CollisionAvoidance}
\end{figure}
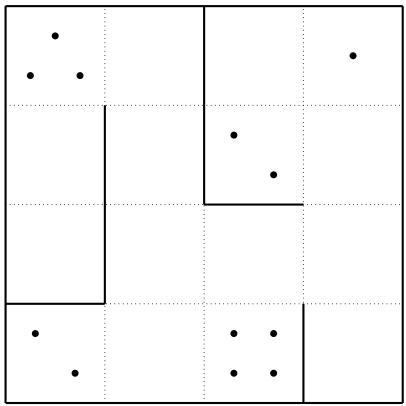

Though ISCGs is a more restricted subclass of congestion games compared to the recent game forms studied in the literature (for example, see \cite{ProjectGames}); there is merit in studying this game form, both due to its applications in the field of multi-agent navigation, and existence guarantee of ``very stable" strategy profiles. To the best of our knowledge, the existence guarantee that we present in this paper is the strongest one in the congestion games literature resilient against weakly improving deviations. It is also known that there does not exist such an existence guarantee in singleton congestion with ``almost'' identical (and increasing) cost functions, even if each resource is accessible by all agents \cite{Arxiv}.


\bigskip

On the technical side, we incorporate a novel approach for proving the aforementioned existence guarantee. As typical in many existence guarantee proofs, we essentially show that any maximal element of an asymmetric and transitive relation defined over the outcomes of the game is always ``stable'', which ensures the existence guarantee. This is almost always done by means of a potential function argument, i.e., by arguing that if an agent (or a coalition in our case) has a ``deviation'' in an outcome, then the resulting outcome must be greater with respect to an asymmetric and transitive relation \cite{Ref_R1973}. On the other hand, our proof does \textit{not} use a potential function argument. As we discuss at the end, the asymmetric and transitive relation we use is not even a potential function, i.e., the argumentation explained above does not work for this relation.

In our perspective, this aspect of our proof illustrates an interesting situation as follows. For a potential function argument to work, the game form at hand must possess the finite improvement property, i.e., any outcome of the game must converge to a stable outcome via natural game play \cite{Ref_MS1996}. However, it is practically not possible to show that a game form has the finite improvement property without first using a potential function argument. Hence, it is hard to know beforehand whether any potential function argument will ever be successful in obtaining an existence guarantee for a given game form. In this sense, our proof nicely demonstrates that even if an educated guess for a potential function fails for the game form at hand, that guess might be still instrumental for obtaining an existence guarantee; and hence, it should not be dismissed immediately.

\paragraph{Our Results.} A coalition of agents may deviate (i.e., they may jointly change their strategies) from an outcome of the game if by doing so they can attain a superior coalitional outcome in the Pareto sense, i.e., we allow for weakly improving deviations. Ideally, we would be interested in a stable outcome that is resilient to weakly improving deviations by any coalition of agents. An outcome satisfying this property is known in the literature as a super strong equilibrium \cite{Ref_FT2009}. However, this equilibrium notion does not exist in most game forms, including ISCGs \cite{Ref_FT2009}. Instead, we prove existence of outcomes satisfying the following three axioms simultaneously:

\begin{description}
\item[\textbf{(A1)}$\:\!$]\small Resilience to weakly improving deviations by singletons

$\quad$(i.e., the outcome should be a \textbf{Nash equilibrium}).

\item[\textbf{(A2)}$\:$] \small Resilience to weakly improving deviations by the grand coalition

$\quad$(i.e., the outcome should be \textbf{Pareto efficient}).

\item[\textbf{(A3)}$\:$] \small Resilience to weakly improving deviations by \emph{a priori} given set of coalitions,

$\quad\;$which partitions the set of agents

$\quad$(i.e., the outcome should be a \textbf{partition equilibrium}).
\end{description}

The axiom \textbf{A1}, requiring that an outcome be resilient to weakly improving deviations by singletons, is the simplest notion of stability that one may consider, and it is commonly used in game-theoretic studies. In a similar vein, the axiom \textbf{A2}, requiring that an outcome be Pareto efficient, is the most natural notion of efficiency, and it is commonly used in the economics literature. A few words are in order, however, pertaining to axiom \textbf{A3}.

Since a super strong equilibrium does not exist in most game forms, a growing trend in the recent literature is to study equilibrium outcomes under various restrictions on coalition formation. The kind of coalitions that agents may form can be specified in the form of an \emph{a priori} given coalition structure. In the context of \textit{resource selection games} (RSGs), Feldman and Tennenholtz \cite{Ref_FT2009} considered a partition coalition structure (i.e., the set of viable coalitions is a partition of the set of agents). And on the basis of a partition coalition structure, they introduced the notion of a partition equilibrium. Also related to this notion, a major theme in multi-agent systems is how autonomous agents may come together and form coherent groupings in order to pursue their individual or collective goals more effectively. In the literature, this problem is known as the coalition structure generation problem; see \cite{Ref_CSGPS}. How a coalition structure is formed is beyond the scope of our paper. We proceed with the assumption that a partition coalition structure is \emph{a priori} given.

In singleton congestion games where each cost function is chosen from three convex increasing functions, even when each resource is accessible by all agents, there may not exist an outcome that satisfies the axioms \textbf{A1}, \textbf{A2}, \textbf{A3}; see \cite{Arxiv}. However, we show (in Theorem \ref{TeoEx}) that in ISCGs there always exists an outcome that satisfies the above three axioms. In other words, we show that in ISCGs one need not sacrifice efficiency even if attention is confined to Nash and partition equilibrium outcomes.

\paragraph{Related Work:} The aforementioned RSGs are similar to ISCGs in that they also involve a set of agents selecting from a set of resources to utilize. But there are two differences: In RSGs, it is assumed that every agent has access to every resource. In this respect, ISCGs are more general than RSGs. On the other hand, the cost functions of resources need not be identical in RSGs. In this respect, RSGs are more general than ISCGs. Singleton congestion games generalize both these game forms in the sense that an agent need not have access to every resource, and the cost functions of resources need not be identical. Moreover, ISCGs are equivalent to the subclass of project games with identical rewards and identical agents' weights \cite{ProjectGames}.

Note that every finite game admits an outcome that satisfies axiom \textbf{A2} since the Pareto dominance relation is asymmetric and transitive. Also, in singleton congestion games an outcome that satisfies axiom \textbf{A1} always exists, since it is a subclass of congestion games, for which the existence of a Nash equilibrium is guaranteed \cite{Ref_R1973}. In RSGs with increasing cost functions, there always exists an outcome that satisfies axioms \textbf{A1} and \textbf{A3} \cite{Ref_ACH2013}.\footnote{It is easy to show that there also exists an outcome satisfying \textbf{A1} and \textbf{A2} in RSGs. However, it is open whether an outcome satisfying \textbf{A2} and \textbf{A3} always exists.} However, as mentioned earlier, there does not always exist an outcome that also satisfies axiom \textbf{A2} \cite{Arxiv}.

A stable outcome that is resilient to improving deviations (instead of weakly improving deviations) of any coalition is known in the literature as a strong equilibrium. It is known that a strong equilibrium always exists in singleton congestion games with monotone cost functions \cite{Ref_HL1997}.

\bigskip

The remainder of the paper is organized as follows: In Section \ref{sec:model}, we formally define ISCGs, and then we present three lemmas that become useful in showing our main result. In Section \ref{sec:mainResult}, we present our main result (Theorem \ref{TeoEx}), and then we discuss various aspects of its proof with some examples.


\section{The Model and the Preliminaries}
\label{sec:model}

An \textit{identical singleton congestion game} (ISCG) is a triplet $\left\langle N,M,f\right\rangle $ where:
\begin{itemize}
\item[--] $N=\left \{1,2,\cdots,n\right\}$ is the set of agents;
\item[--] $M=\left \{1,2,\cdots,m\right\}$ is the set of resources;
\item[--] $f=\left(f_{i}\right)_{i\in M}$ is a sequence (called the \textit{feasibility constraint}) such that $f_{i} \subseteq N$ for each $i \in M$; and ${\textstyle \bigcup \nolimits_{i\in M}}f_{i}=N$.
\end{itemize}
An \textit{allocation} (or an outcome) is an ordered sequence $a=\left(a_{i}\right)_{i \in M}$ such that $a_{i} \cap a_{\overline{i}}=\emptyset$ for all $i, \overline{i} \in M$ (where $i\neq \overline{i}$); and $a_{1}\cup \cdots \cup a_{m}=N$. An allocation $a$ is \textit{feasible} if for each $i$, $a_{i}\subseteq f_{i}$. Let $\mathcal{A}$ be the domain of allocations. Let $\mathcal{A}^{f}\subseteq A$ be the domain of feasible allocations.

\medskip

The interpretation of the game is as follows: Under an allocation $a$, the agents in $a_{i}$ are served by resource $i$. The congestion level at resource $i$ is proportional to its cardinality $\left \vert a_{i}\right \vert$. Agents try to avoid congested resources. We assume that each agent has access to at least one resource (hence, $\mathcal{A}^{f} \neq \emptyset$). This assumption is innocuous since there is no point in including an agent in the set $N$ if it cannot be assigned to any resource.

\medskip

A \textit{coalition} $c\subseteq N$ is a nonempty subset of agents. We say that coalition $c$ \textit{blocks} an allocation $a\in \mathcal{A}^{f}$ if there exists an allocation $\overline{a}\in \mathcal{A}^{f}$ such that:

\begin{itemize}
\item[--] for each resource $i\in M$, $a_{i}\smallsetminus c=\overline{a}%
_{i}\smallsetminus c$;
\item[--] for each $\left(  j,i_{1},i_{2}\right)  \in \left(  c\times M\times
M\right)  $ such that $j\in a_{i_{1}}$ and $j\in \overline{a}_{i_{2}}$,
$\left \vert \overline{a}_{i_{2}}\right \vert \leq \left \vert a_{i_{1}%
}\right \vert $;
\item[--] for some $\left(  j,i_{1},i_{2}\right)  \in \left(  c\times M\times
M\right)  $ such that $j\in a_{i_{1}}$ and $j\in \overline{a}_{i_{2}}$,
$\left \vert \overline{a}_{i_{2}}\right \vert <\left \vert a_{i_{1}}\right \vert $.
\end{itemize}
In simpler terms: Coalition $c$ blocks allocation $a$ if coalition members can change the resources that they use in some manner such that the coalition becomes better off in the Pareto sense. We refer to the allocation that results (above, $\overline{a}$) ``the allocation induced when $c$ blocks $a$''.

\medskip

Let $\mathcal{P}_{\geq1}(N)$ be the domain of coalitions, i.e., $\mathcal{P}_{\geq1}(N)=\mathcal{P}(N)\smallsetminus\{ \emptyset \}$, where $\mathcal{P}(N)$ is the power set of $N $. A \textit{coalition structure} $C\subseteq \mathcal{P}_{\geq1}(N)$ is a potential set for viable coalitions. We say that an allocation $a\in \mathcal{A}^{f}$ is $C$\textit{-stable} if there exists no $c\in C$ such that $c$ blocks $a$. Note that defined this way, a \textit{super strong equilibrium} is a $\mathcal{P}_{\geq1}(N)$-stable allocation. That is, a super strong equilibrium is a feasible allocation $a$ such that there exists no coalition that blocks $a$.

Though the notion of a super strong equilibrium is very appealing, it does not always exist even for the very restricted instances of ISCGs.\footnote{See \cite{Ref_FT2009} for a very simple ISCG instance (with only three agents and two resources) and for which a super strong equilibrium does not exist.} Therefore, we consider in this paper less demanding conditions. Specifically, we are interested in the existence of feasible allocations that are Pareto efficient, a Nash equilibrium, and a partition equilibrium. We define them next.

\medskip

Let $\mathcal{P}_{=1}(N)=\left\{\left \{ 1 \right \}, \left \{ 2 \right \}, \cdots, \left\{n\right\}\right \}$. Notice that under the coalition structure $\mathcal{P}_{=1}(N)$, the only viable coalitions are singletons. As stated using our notation: An allocation $a \in \mathcal{A}^{f}$ is a \textit{Nash equilibrium} if $a$ is $\mathcal{P}_{=1}(N)$-stable. Furthermore, an allocation $a \in \mathcal{A}^{f}$ is \textit{Pareto efficient} if it is $\{N\}$-stable (i.e., if the grand coalition does not block $a$).

We also consider the situation where the set of viable coalitions is a partition of the set of agents. Formally, a \textit{partition coalition structure} $C$ is such that $c \cap \overline{c} = \emptyset$ for all $c, \overline{c} \in C$ (where $c \neq \overline{c}$); and $\bigcup \nolimits_{c\in C} = N$. Given a partition coalition structure $C$, we refer to an allocation $a\in \mathcal{A}^{f}$ as a \textit{partition equilibrium} if $a$ is $C$-stable.

\medskip

The main result of our paper is that in an ISCG, for any given partition coalition structure $C$, there exists a feasible allocation $a\in \mathcal{A}^{f}$ such that $a$ is $\mathcal{P}_{=1}(N)$-stable, $\{N\}$-stable, and $C$-stable. That is, we show that there always exists an outcome that satisfies the three axioms given in Section 1. We present this result in Theorem \ref{TeoEx} in Section 3. The key to our proof of this result is what we call the \textquotedblleft kernel values\textquotedblright\ of an allocation $a$. In the remainder of this section, we define the kernel values, and we present three lemmas that become useful in showing Theorem \ref{TeoEx}. To ease understanding, we will illustrate the concepts that we introduce by referring to the following example.

\begin{example}
\label{ExKernel}In an ISCG with four resources and eight agents, consider the coalition $c=\left\{1,2,6\right\}$, and the allocation $a$ such that: $a_{1} = \left\{1,2\right\}, a_{2} = \left\{3,4,5\right\}, a_{3} = \left\{6,7,8\right\}, a_{4} = \emptyset$.\hfill $\Diamond$
\end{example}

\noindent -- Given an allocation $a$, let $\omega: M \rightarrow M$ be a bijection such that $|a_{\omega \left(1\right)}| \geq |a_{\omega \left(  2\right)}| \geq \cdots \geq |a_{\omega \left(m\right)}|$. That is, under $\omega$ resources are ordered, in order of the number of agents assigned to them under $a$. For instance, for $a$ in Example \ref{ExKernel}, $\omega$ may be as follows: $\omega(1) = 2, \omega(2) = 3, \omega(3) = 1, \omega(4) = 4$. Looking at $\omega$, note that under $a$, the cardinality of resource $2$ is maximal and the cardinality of resource $4$ is minimal.

\medskip

\noindent -- The \emph{kernel} of allocation $a$, denoted by $k\left(a\right)$, is the ordered list: $(|a_{\omega \left(1\right)}|,$ $|a_{\omega \left(2\right)}| \cdots, |a_{\omega\left(m\right)}|)$. For instance, for $a$ in Example \ref{ExKernel}, we have $k(a)=(3,3,2,0)$.

\medskip

\noindent -- Given an allocation $a$, for some coalition $c$ let $\omega^{c}:M\rightarrow M$ be a bijection such that $|c\cap a_{\omega^{c}\left(  1\right)  } |\geq|c\cap a_{\omega^{c}\left(  2\right)  }|\geq \cdots \geq|c\cap a_{\omega^{c}\left(  m\right)  }|$. That is, under $\omega^{c}$ resources are ordered, in order of how many agents in coalition $c$ are assigned to them under $a$. For instance, for $a$ and $c$ in Example \ref{ExKernel}, $\omega^{c}$ may be as follows: $\omega^{c}(1)=1,\omega^{c}(2)=3,\omega^{c}(3)=2,\omega^{c}(4)=4$. Looking at $\omega^{c}$, note that under $a$, it is resource $1$ which is assigned the maximum number of agents in coalition $c$, which is followed by resource $3$. The resources $2$ and $4$ are ordered at the end under $\omega^{c}$ since no agent in $c$ is assigned to these two resources under $a$.

\medskip

\noindent -- The $c$-\emph{kernel} of allocation $a$, denoted by $k(c,a)$, is the ordered list: $(|c\cap a_{\omega^{c}\left(1\right)  }|$, $|c\cap a_{\omega^{c}\left( 2\right)}|$,$\cdots$,$|c\cap a_{\omega^{c}\left(m\right)}|)$. For instance, for $a$ and $c$ in Example \ref{ExKernel}, we have $k(c,a)=(2,1,0,0)$.

\medskip

\noindent -- Given an allocation $a$, for coalition $c$, let $c_{1},\cdots,c_{n}$ be the partition of coalition $c$ such that for each $j\in c_{s}$, if $j \in a_{i}$ then $\left \vert a_{i}\right \vert = s$. That is, under $a$, the agents in $c_{s}$ are those coalition members that are assigned to resources whose cardinality is $s$. For instance, for $a$ and $c$ in Example \ref{ExKernel}, we have: $c_{1} = \emptyset, c_{2} = \{1,2\}, c_{3} = \{6\}, c_{4} = c_{5} = c_{6} = c_{7} = c_{8} = \emptyset$.

\medskip

\noindent -- The $c$-\emph{welfare-kernel} of allocation $a$, denoted by $w\left(c,a\right)$, is the ordered list: $\left(|c_{n}|,|c_{n-1}|,\cdots,|c_{1}|\right)$. For instance, for $a$ and $c$ in Example \ref{ExKernel}, we have $w\left(c,a\right)=(0,0,0,0,0,1,2,0)$. Looking at $w\left(c,a\right)$, we can say that under $a$, no agent in coalition $c$ is assigned to a resource whose cardinality is 8 or 7 or 6 or 5 or 4 or 1. Also, we see that one coalition member is assigned to a resource with cardinality 3, and two coalition members are assigned to resources with cardinality 2.

\medskip

Below we present three lemmas pertaining to kernel values. These lemmas become instrumental in proving Theorem \ref{TeoEx}. Before that, however, we introduce the notion of a ``chain'', which helps to simplify our exposition. Consider two allocations, say $a$ and $\overline{a}$. An $a\overline{a}$\textit{-chain}, represented as $i_{1}\rightarrow \left(j_{1}\right) \rightarrow i_{2}\rightarrow \left(j_{2}\right)  \rightarrow \cdots \rightarrow \left(j_{s-1}\right) \rightarrow i_{s}$ ($s\geq2$), refers to the situation such that:
\begin{itemize}
\item[--] $j_{1}\in a_{i_{1}}$, $j_{1}\in \overline{a}_{i_{2}}$, and $i_{1} \neq i_{2}$;

$\vdots$

\item[--] $j_{s-1}\in a_{i_{s-1}}$, $j_{s-1}\in \overline{a}_{i_{s}}$, and $i_{s-1}\neq i_{s}$.
\end{itemize}
In loose terms, an $a\overline{a}\textit{-chain}$ specifies how allocation $a$ can be transformed into allocation $\overline{a}$: The transformation involves moving agent $j_{1}$ from resource $i_{1}$ to $i_{2}$, agent $j_{2}$ from resource $i_{2}$ to $i_{3}$, and so on. To transform $a$ into $\overline{a}$, it may be necessary to move some other agents, too, since there may exist some other $a\overline{a}\textit{-chains}$. Above, we refer to $s$ as the \textit{length} of the $a\overline{a}$-chain. Note that the minimum length of an $a\overline{a}$-chain is 2. Also, note that there always exists an $a\overline{a}$-chain unless $a=\overline{a}$.

\medskip

We also introduce the ``chain addition operator'', denoted by $\oplus$. We write: $\overline{a} = a \oplus i_{1}\rightarrow \left(j_{1}\right)\rightarrow i_{2}\rightarrow \left(j_{2}\right)\rightarrow \cdots \rightarrow \left(j_{s-1}\right) \rightarrow i_{s}$ if
\begin{itemize}
\item[--] $i_{1}\rightarrow \left(j_{1}\right)\rightarrow i_{2} \rightarrow \left(j_{2}\right)\rightarrow \cdots \rightarrow \left(j_{s-1}\right)\rightarrow i_{s}$ is an $a\overline{a}$-chain,
\item[--] for each $j\in N\smallsetminus \left\{j_{1}, j_{2}, \cdots, j_{s-1}\right\}$, the resource to which $j$ is assigned is the same under $a$ and $\overline{a}$.
\end{itemize}
In loose terms, the allocation $\overline{a}$ is obtained from $a$ if the resources that agents $j_{1},j_{2},\cdots,j_{s-1}$ use change as indicated by the $a\overline{a}$-chain. Notice that if $\overline{a}=a\oplus i_{1} \rightarrow \left(j_{1}\right) \rightarrow i_{2}\rightarrow \left(j_{2}\right) \rightarrow i_{3}$, then $\overline{a}=\left(a \oplus i_{1} \rightarrow \left(j_{1}\right) \rightarrow i_{2}\right) \oplus i_{2} \rightarrow \left(j_{2}\right) \rightarrow i_{3}$. Also, notice that if $i_{1}\rightarrow \left(j_{1}\right)  \rightarrow i_{2}$ is an $a\overline{a}$-chain, this does not imply that $\overline{a}=a\oplus i_{1} \rightarrow \left(  j_{1}\right)  \rightarrow i_{2}$ because the transformation of $a$ into $\overline{a}$ may require the execution of some other $a\overline{a}$-chains, too (besides $i_{1}\rightarrow \left(j_{1}\right) \rightarrow i_{2}$).

\medskip

Lemma \ref{LemSwitch} below states that given a partition coalition structure $C$, under some allocation if two agents in some coalition $c \in C$ switch their positions, then the kernel values remain the same as before. We delegate the easy proof to the Appendix.

\begin{lemma}
\label{LemSwitch}Let $C$ be a partition coalition structure. Let $j_{1},j_{2}\in c\in C$. Let allocations $a,\overline{a} \in \mathcal{A}$ be such that $\overline{a}= a \oplus i_{1}\rightarrow \left(j_{1}\right)\rightarrow i_{2} \rightarrow \left(j_{2}\right)\rightarrow i_{1}$. Then, the kernel values of $a$ and $\overline{a}$ are the same. Also, for each $\widetilde{c}\in C$, the $\widetilde{c}$-kernel and $\widetilde{c}$-welfare-kernel values of $a$ and $\overline{a}$ are the same.
\end{lemma}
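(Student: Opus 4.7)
The plan is to verify the three equalities by carefully tracking which agent sits at which resource after the swap, and then observing that three quantities are invariant: (i) the cardinality of every resource, (ii) for each $\tilde{c} \in C$, the number of $\tilde{c}$-members sitting at each resource, and (iii) for each agent in any $\tilde{c}$, the cardinality of the resource it occupies. Once these three invariances are established, all three kernel statements follow directly from their definitions.

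First I would compute the effect of the swap on resource cardinalities. By the definition of $\oplus$, the only agents whose assignment changes are $j_1$ (moving from $i_1$ to $i_2$) and $j_2$ (moving from $i_2$ to $i_1$), so $\overline{a}_{i_1} = (a_{i_1} \setminus \{j_1\}) \cup \{j_2\}$, $\overline{a}_{i_2} = (a_{i_2} \setminus \{j_2\}) \cup \{j_1\}$, and $\overline{a}_i = a_i$ for all $i \notin \{i_1,i_2\}$. Hence $|\overline{a}_i| = |a_i|$ for every $i \in M$, which immediately gives $k(\overline{a}) = k(a)$.

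Next I would verify invariance of $\tilde{c}$-membership counts at each resource. Fix $\tilde{c} \in C$. If $\tilde{c} = c$, then $\{j_1,j_2\} \subseteq \tilde{c}$, so at $i_1$ we trade $j_1 \in \tilde{c}$ for $j_2 \in \tilde{c}$ (and symmetrically at $i_2$), leaving $|\tilde{c} \cap \overline{a}_i| = |\tilde{c} \cap a_i|$ for all $i$. If $\tilde{c} \neq c$, then because $C$ is a partition, $\tilde{c} \cap c = \emptyset$, so $j_1, j_2 \notin \tilde{c}$, and the swap affects no member of $\tilde{c}$ at any resource, again giving $|\tilde{c} \cap \overline{a}_i| = |\tilde{c} \cap a_i|$. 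The defining bijection $\omega^{\tilde{c}}$ for $\overline{a}$ can therefore be taken identical to that for $a$, yielding $k(\tilde{c}, \overline{a}) = k(\tilde{c}, a)$.

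Finally, for the welfare kernel I would note that any agent $j \in \tilde{c}$ is, under both $a$ and $\overline{a}$, assigned to a resource of the same cardinality: if $j \notin \{j_1,j_2\}$ its resource is unchanged, and since $|\overline{a}_i| = |a_i|$ everywhere, its cardinality is unchanged; if $j = j_1$ then $j$ moves from $i_1$ with $|a_{i_1}|$ agents to $i_2$ with $|\overline{a}_{i_2}| = |a_{i_2}|$ agents, but $j_2 \in \tilde{c}$ precisely makes the opposite move (this is the case $\tilde{c} = c$, as otherwise $j \in \tilde{c}$ cannot equal $j_1$), so the partition classes $\tilde{c}_s$ are preserved as multisets. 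Hence $w(\tilde{c}, \overline{a}) = w(\tilde{c}, a)$. There is no real obstacle here: the only step that requires care is separating the cases $\tilde{c} = c$ and $\tilde{c} \neq c$, where the partition structure of $C$ is precisely what rules out the possibility of splitting $\{j_1, j_2\}$ across two different coalitions in the structure.
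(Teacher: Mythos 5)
Your proof is correct and follows essentially the same route as the paper's: both rest on the observations that a swap within a coalition leaves every resource's cardinality unchanged and leaves the count of each coalition's members at each resource unchanged, from which all three kernel equalities follow by definition. Your version is merely more explicit in the case analysis ($\widetilde{c}=c$ versus $\widetilde{c}\neq c$) than the paper's brief ``relabeling'' argument.
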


We now introduce two symmetric and transitive relations that will become essential to prove Theorem \ref{TeoEx}. These relations are composed of lexicographical comparisons of various kernel values defined above. We will be using $\prec$ to denote ``lexicographically smaller than". The first one of these relations is defined for partition coalition structures. Given a partition coalition structure $C$, we say that allocation $a$
$C$-\textit{balance dominates} $\overline{a}$,
\begin{itemize}
\item[--] if $k\left(a\right)\prec k\left(\overline{a}\right)$;
\item[--] or if $k\left(a\right) = k\left(\overline{a}\right)$ and
\begin{itemize}
\item[-] for each $c\in C$, $k\left(c,a\right)\prec k\left(c,\overline{a}\right)$ or $k\left(c,a\right) = k\left(c,\overline{a}\right)$,
\item[-] and for some $c\in C$, $k\left(c,a\right)\prec k\left(c,\overline{a}\right)$.
\end{itemize}
\end{itemize}
In loose terms: Allocation $a$ $C$-balance dominates $\overline{a}$ if under $a$ the distribution of agents to resources is more even. Moreover, if there is a tie in this regard, then $a$ $C$-balance dominates $\overline{a}$ if the distribution of coalition members to resources is more even under $a$. Lemma \ref{LemCdom} below is a straightforward observation pertaining to $C$-balance dominance relation. We delegate the easy proof to the Appendix.

\begin{lemma} \label{LemCdom}
Let $C$ be a partition coalition structure. Let $j_{1} \in c \in C$. Let allocations $a$ and $\overline{a}$ be such that $\overline{a} = a\oplus i_{1} \rightarrow \left(j_{1}\right) \rightarrow i_{2}$. Then allocation $\overline{a}$ $C$-balance dominates $a$: $(a)$ if $\left\vert a_{i_{1}}\right\vert \geq \left \vert a_{i_{2}}\right\vert + 2$,
$(b)$ or if $\left\vert a_{i_{1}}\right \vert = \left \vert a_{i_{2}}\right\vert + 1$ and $\left\vert c \cap a_{i_{1}}\right\vert \geq \left\vert c \cap a_{i_{2}}\right\vert + 2$.
\end{lemma}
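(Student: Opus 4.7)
The plan is to split into cases $(a)$ and $(b)$ and, in each case, track exactly how the multiset of resource cardinalities and the relevant multisets of coalition-cardinalities are modified when $j_{1}$ is reassigned from $i_{1}$ to $i_{2}$.

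In case $(a)$, I would set $\alpha=\left\vert a_{i_{1}}\right\vert $ and $\beta=\left\vert a_{i_{2}}\right\vert $ with $\alpha\geq\beta+2$, so the move decrements one occurrence of $\alpha$ to $\alpha-1$ and increments one occurrence of $\beta$ to $\beta+1$, while still $\alpha-1\geq\beta+1$. Let $m_{\alpha}$ denote the multiplicity of $\alpha$ in $k(a)$. The first $m_{\alpha}-1$ entries of the sorted sequences $k(a)$ and $k(\overline{a})$ both equal $\alpha$; at position $m_{\alpha}$, the entry in $k(a)$ is still $\alpha$, while the entry in $k(\overline{a})$ is $\alpha-1$ (since $\alpha$ has multiplicity $m_{\alpha}-1$ in $\overline{a}$ and $i_{1}$ now contributes a copy of $\alpha-1$). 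This yields $k(\overline{a})\prec k(a)$, so $\overline{a}$ already $C$-balance dominates $a$ via the first clause of the definition.

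In case $(b)$, $\left\vert a_{i_{1}}\right\vert =\left\vert a_{i_{2}}\right\vert +1$, so the reassignment merely swaps the cardinalities of $i_{1}$ and $i_{2}$; hence $k(\overline{a})=k(a)$ and I must verify the second clause. For any $\widetilde{c}\in C$ distinct from $c$, the partition property forces $j_{1}\notin\widetilde{c}$, so $\left\vert \widetilde{c}\cap\overline{a}_{i}\right\vert =\left\vert \widetilde{c}\cap a_{i}\right\vert $ for every $i$, giving $k(\widetilde{c},\overline{a})=k(\widetilde{c},a)$. For the coalition $c$ itself, setting $\gamma_{1}=\left\vert c\cap a_{i_{1}}\right\vert $ and $\gamma_{2}=\left\vert c\cap a_{i_{2}}\right\vert $, the hypothesis $\gamma_{1}\geq\gamma_{2}+2$ reproduces the structural situation of case $(a)$ at the level of $c$-cardinalities, and the same lexicographic argument gives $k(c,\overline{a})\prec k(c,a)$.

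The main obstacle is making the lexicographic comparison in case $(a)$ fully rigorous: one must be careful about the subcase $\alpha=\beta+2$, where the two modifications collide at the single value $\alpha-1=\beta+1$, and about the edge subcase $m_{\alpha}=1$, where $i_{1}$ is the unique resource attaining the current maximum cardinality. In both subcases the same position-$m_{\alpha}$ comparison still works, but it deserves explicit verification; the rest of the argument is routine bookkeeping, and case $(b)$ reduces cleanly to case $(a)$ once the coalitions other than $c$ are dispatched via the partition property.
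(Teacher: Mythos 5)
Your proposal follows essentially the same route as the paper's proof: case $(a)$ is settled by the first (kernel) clause of the $C$-balance dominance definition, and case $(b)$ by observing that the kernel is preserved because the cardinalities of $i_{1}$ and $i_{2}$ merely swap, that coalitions $\widetilde{c}\neq c$ are unaffected thanks to the partition property, and that the $c$-kernel strictly improves lexicographically. The paper's own argument is much terser (it simply asserts that the distribution ``becomes more even''), so your explicit multiset bookkeeping is a welcome refinement rather than a departure. One slip is worth correcting: in case $(a)$ you claim that the first $m_{\alpha}-1$ entries of $k(a)$ and $k(\overline{a})$ equal $\alpha$ and that the comparison is decided at position $m_{\alpha}$; this implicitly assumes that $\alpha=\left\vert a_{i_{1}}\right\vert$ is the maximum cardinality over all resources. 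In general, resources with cardinality exceeding $\alpha$ occupy the leading positions of the sorted kernel. The repair is immediate: the move touches only $i_{1}$ and $i_{2}$, whose new cardinalities $\alpha-1$ and $\beta+1$ are both at most $\alpha-1$, so the multiset of entries strictly larger than $\alpha$ is identical in $k(a)$ and $k(\overline{a})$, and the comparison is decided at position $\left(\sum_{v>\alpha}m_{v}\right)+m_{\alpha}$ rather than at position $m_{\alpha}$. With that correction the lexicographic argument is complete, and the same remark applies verbatim when you reuse it for the $c$-kernel comparison in case $(b)$.
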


The second relation that we introduce compares $c$-welfare-kernel values of allocations, again lexicographically. For a coalition $c$, we say that allocation $a$ $c$-\textit{welfare-dominates} $\overline{a}$ if $w\left(c,a\right) \prec w\left(c,\overline{a}\right)$. Lemma \ref{Lemcwelfdom} pertains to this relation, whose proof we delegate to the Appendix.

\begin{lemma} \label{Lemcwelfdom}
Let $C$ be a partition coalition structure. Let $c\in C$ be a coalition. Let $a$ be an allocation.

\medskip

\noindent (a) Suppose that coalition $c$ blocks allocation $a$. Let
$\widetilde{a}$ be the allocation induced when $c$ blocks $a$. Then,
$\widetilde{a}$ $c$-welfare-dominates $a$.

\medskip

\noindent (b) Suppose that allocation $\overline{a}$ $c$-welfare-dominates $a$.
Also, suppose that $\widetilde{M}\subset M$ is such that for each
$s\in \{1,2,\cdots,n\}$,
\[{\textstyle \sum \nolimits_{i\in \widetilde{M},\left \vert a_{i}\right \vert =s}}
\left \vert c\cap a_{i}\right \vert =%
{\textstyle \sum \nolimits_{i\in \widetilde{M},\left \vert \overline{a}%
_{i}\right \vert =s}}
\left \vert c\cap \overline{a}_{i}\right \vert .\]
Let $k_{max}=max_{i\in M\smallsetminus \widetilde{M}}\left \vert a_{i}\right \vert $ and $\overline{k}_{max}=max_{i\in M\smallsetminus \widetilde{M}}\left \vert \overline{a}_{i}\right \vert $. Then,
\[
\overline{k}_{max}\leq k_{max}, \text{and }
{\textstyle \sum \nolimits_{i\in M\smallsetminus \widetilde{M},\left \vert
a_{i}\right \vert =k_{max}}}
\left \vert c\cap a_{i}\right \vert \geq
{\textstyle \sum \nolimits_{i\in M\smallsetminus \widetilde{M},\left \vert
\overline{a}_{i}\right \vert =k_{max}}}
\left \vert c\cap \overline{a}_{i}\right \vert
\]
\end{lemma}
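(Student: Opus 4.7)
For part (a), my plan is to track the congestion each coalition member faces. For $j \in c$, let $s_j$ be the cardinality of the resource containing $j$ under $a$, and $\widetilde{s}_j$ the analogous quantity under $\widetilde{a}$. The three bullet points defining ``$c$ blocks $a$'' say precisely that $\widetilde{s}_j \leq s_j$ for every $j \in c$, with strict inequality for some $j_{0} \in c$. Since $|c_s|$ counts the coalition members with $s_j = s$, the pointwise inequality gives the tail bound $\sum_{s \geq t} |\widetilde{c}_s| \leq \sum_{s \geq t} |c_s|$ for every threshold $t$. Let $s^{\ast}$ be the largest index at which $|c_{s^{\ast}}| \neq |\widetilde{c}_{s^{\ast}}|$ (its existence is guaranteed by the strict improvement at $j_{0}$). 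Evaluating the tail inequality at $t = s^{\ast}$ and using that all higher-order entries agree by the choice of $s^{\ast}$, I would conclude $|\widetilde{c}_{s^{\ast}}| < |c_{s^{\ast}}|$, giving $w(c,\widetilde{a}) \prec w(c,a)$ in the lex order.

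For part (b), my plan is to project the welfare-kernel onto $M \smallsetminus \widetilde{M}$. Define $w_s^{X}(c,a) := \sum_{i \in X,\, |a_i| = s} |c \cap a_i|$ for $X \subseteq M$, so that $w_s(c,\cdot) = w_s^{\widetilde{M}}(c,\cdot) + w_s^{M \smallsetminus \widetilde{M}}(c,\cdot)$. The hypothesis is exactly $w_s^{\widetilde{M}}(c,a) = w_s^{\widetilde{M}}(c,\overline{a})$ for every $s$; subtracting, the assumed lex-dominance $w(c,\overline{a}) \prec w(c,a)$ transfers verbatim to the projections on $M \smallsetminus \widetilde{M}$, i.e.\ $\bigl(w_n^{M \smallsetminus \widetilde{M}}(c,\overline{a}),\ldots,w_1^{M \smallsetminus \widetilde{M}}(c,\overline{a})\bigr) \prec \bigl(w_n^{M \smallsetminus \widetilde{M}}(c,a),\ldots,w_1^{M \smallsetminus \widetilde{M}}(c,a)\bigr)$. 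By the definition of $k_{max}$, the projected kernel of $a$ vanishes at every position $s > k_{max}$, and its leading nonzero entry at position $k_{max}$ is exactly $\sum_{i \in M \smallsetminus \widetilde{M},\, |a_i| = k_{max}} |c \cap a_i|$. Lex-dominance then forces $w_s^{M \smallsetminus \widetilde{M}}(c,\overline{a}) = 0$ for all $s > k_{max}$ and $w_{k_{max}}^{M \smallsetminus \widetilde{M}}(c,\overline{a}) \leq w_{k_{max}}^{M \smallsetminus \widetilde{M}}(c,a)$, which is precisely the second displayed inequality and, combined with the first, yields $\overline{k}_{max} \leq k_{max}$.

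The step I expect to be the main obstacle is the edge case in which the maximum cardinality in $M \smallsetminus \widetilde{M}$ under $\overline{a}$ is realized at a resource containing no coalition members: such a resource contributes nothing to the projected welfare-kernel, so the lex comparison above does not ``see'' it and cannot by itself rule out $\overline{k}_{max} > k_{max}$. To close this gap I would exploit the conservation identity implicit in the hypothesis: summing $w_s^{\widetilde{M}}(c,a) = w_s^{\widetilde{M}}(c,\overline{a})$ over $s$ shows that the total number of coalition members in $\widetilde{M}$ (and hence in $M \smallsetminus \widetilde{M}$) is preserved between $a$ and $\overline{a}$. Combining this accounting with a careful case analysis on how the preserved coalition members can be redistributed across cardinality classes in $M \smallsetminus \widetilde{M}$ should rule out the pathological configurations and complete the proof of the first conclusion.
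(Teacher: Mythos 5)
Your part (a) is correct and is essentially the paper's argument in a cleaner form: the paper walks down the welfare-kernel entry by entry, arguing at each position that a surplus for $\widetilde{a}$ would force some member of $c$ to be strictly worse off, which is exactly the contrapositive of your tail-sum inequality $\sum_{s\ge t}|\widetilde{c}_s|\le\sum_{s\ge t}|c_s|$. Likewise, your projection of the welfare-kernel onto $M\smallsetminus\widetilde{M}$ in part (b) is precisely the paper's own decomposition of the lexicographic comparison into a tied $\widetilde{M}$-part and a decisive $(M\smallsetminus\widetilde{M})$-part, and it correctly delivers the second displayed inequality as well as the vanishing of $w_s^{M\smallsetminus\widetilde{M}}(c,\overline{a})$ for $s>k_{max}$.

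The gap is the edge case you flagged, and the fix you propose cannot close it. Conservation of the number of members of $c$ in $\widetilde{M}$ (hence in $M\smallsetminus\widetilde{M}$) says nothing about agents outside $c$, and the welfare-kernel is completely blind to them; nothing in the stated hypotheses prevents non-coalition agents from piling onto a resource of $M\smallsetminus\widetilde{M}$ that contains no member of $c$. Indeed, as literally stated the first conclusion is false: take $n=4$, $c=\{1\}$, $a=(\{1,2\},\{3\},\{4\})$, $\overline{a}=(\{1\},\{2,3,4\},\emptyset)$ and $\widetilde{M}=\emptyset$; then $w(c,\overline{a})=(0,0,0,1)\prec(0,0,1,0)=w(c,a)$, yet $\overline{k}_{max}=3>2=k_{max}$. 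The missing ingredient is the hypothesis --- implicit in the lemma, but used explicitly in the paper's proof via the phrase ``the only agents that move are those in $c$'' --- that $a_i\smallsetminus c=\overline{a}_i\smallsetminus c$ for every $i$, which does hold in the lemma's only application, where $\overline{a}$ arises from $a$ by a deviation of $c$ (possibly followed by within-coalition swaps). With that hypothesis the edge case evaporates in one line: if $\overline{k}_{max}>k_{max}$, a witnessing resource $i^{\ast}$ has $|\overline{a}_{i^{\ast}}|>|a_{i^{\ast}}|$, so it must have gained a member of $c$, making $w_{\overline{k}_{max}}^{M\smallsetminus\widetilde{M}}(c,\overline{a})>0$ while the projected kernel of $a$ vanishes at every position above $k_{max}$, contradicting the projected lex-dominance you already established. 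You should import this hypothesis rather than pursue the counting argument, which cannot see the offending resource.
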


\section{The Main Result}
\label{sec:mainResult}

This section is devoted to our main result: In Theorem \ref{TeoEx}, we show that in an ISCG, there always exists an allocation that satisfies the three axioms given in Section 1. The proof of the theorem is fairly involved. We prove the theorem by showing that every maximal outcome with respect to the $C$-balance dominance relation satisfies the three axioms. At the end of the section, we also present an example and show that it is possible that an outcome satisfies the three axioms and yet not be maximal with respect to the $C$-balance dominance relation.

\begin{theorem}\label{TeoEx}
In an ISCG, for any given partition coalition structure $C$, there always exists an allocation $a\in A^{f}$ such that $a$ is $\mathcal{P}_{=1}(N)$-stable, $\{N\}$-stable, and $C$-stable. That is, in an ISCG there always exists a Pareto efficient outcome that is a Nash equilibrium and a partition equilibrium.
\end{theorem}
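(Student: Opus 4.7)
The plan is to let $a^* \in \mathcal{A}^f$ be an allocation maximal with respect to the $C$-balance dominance relation; such a maximum exists because $\mathcal{A}^f$ is finite and nonempty and $C$-balance dominance is a strict partial order on it (asymmetric and transitive as a lexicographic construction). I then verify each of the three axioms by contradiction: given a putative blocking coalition, I produce a feasible allocation that strictly $C$-balance dominates $a^*$.

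Nash stability is immediate. If a singleton $\{j\}$ has a strictly improving move from $i_1$ to $i_2$, then $|a^*_{i_1}| \geq |a^*_{i_2}| + 2$, so Lemma \ref{LemCdom}(a) already produces a $C$-balance dominating feasible allocation, contradicting maximality.

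For $C$-stability, suppose $c \in C$ blocks $a^*$ with induced allocation $\widetilde a$. By Lemma \ref{Lemcwelfdom}(a), $\widetilde a$ $c$-welfare-dominates $a^*$; however, $\widetilde a$ itself need not $C$-balance dominate $a^*$, since the global kernel $k(\widetilde a)$ and the other coalitions' $c'$-kernels may move the wrong way. My plan is to locate the largest cardinality level $s^*$ at which $w(c,\cdot)$ strictly drops, apply Lemma \ref{Lemcwelfdom}(b) with $\widetilde M$ equal to the resources of cardinality $> s^*$ (on which the $c$-profile is already unchanged), and combine with Lemma \ref{LemSwitch} — which lets me permute members inside any single coalition freely without disturbing any kernel value — to isolate a pair of resources $(i_1,i_2)$ and an agent $j\in c$ satisfying the hypothesis of Lemma \ref{LemCdom}(a) or (b). Moving $j$ from $i_1$ to $i_2$ in $a^*$ yields the desired $C$-balance dominating allocation. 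Pareto efficiency goes through the same machinery taking $c = N$: the $N$-welfare-kernel is, up to weighting, the sorted vector of agent costs, so a Pareto-improving deviation $N$-welfare-dominates, and the argument above produces the $C$-balance dominating allocation.

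The main obstacle is the technical bookkeeping in the $C$-stability step. Although $c$-welfare-dominance is guaranteed after a block, converting it into dominance with respect to both the global kernel $k(\cdot)$ and the per-coalition kernels $k(c',\cdot)$ for every $c'\in C$ is delicate: the rearrangement inside $c$ that realizes the block can simultaneously worsen the assignment profile of some $c' \neq c$, and it can also inflate the global kernel unless we carefully choose which single move to extract. Lemmas \ref{LemSwitch} and \ref{Lemcwelfdom}(b) are tailored precisely to handle this — Lemma \ref{LemSwitch} by allowing free intra-coalition swaps that preserve every kernel, and Lemma \ref{Lemcwelfdom}(b) by peeling off the resources that contribute nothing to the welfare improvement — but coordinating them to land inside the scope of Lemma \ref{LemCdom} and thereby convert a coalition-wide block into a single-agent move is the heart of the proof.
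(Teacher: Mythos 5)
Your frame is the right one and matches the paper's: take $a^*$ maximal for the $C$-balance dominance relation, get \textbf{A1} from Lemma \ref{LemCdom}(a), and derive a contradiction from a blocking coalition $c\in C$ via Lemma \ref{Lemcwelfdom}(a). You also correctly diagnose the central subtlety: the induced allocation need not itself $C$-balance dominate $a^*$ (the example the paper gives after the theorem confirms the relation is not a potential function). But the step you defer as ``the heart of the proof'' is exactly the part that needs an argument, and your sketch of it does not go through as stated. First, your choice of $\widetilde{M}$ as the set of resources of cardinality above $s^*$ does not satisfy the hypothesis of Lemma \ref{Lemcwelfdom}(b): that hypothesis requires a \emph{fixed} subset $\widetilde{M}$ on which the level-by-level $c$-profiles of $a^*$ and of the deviation agree, and the set of resources above level $s^*$ under $a^*$ is in general not the same set as under the deviation, nor do the profiles restricted to either set match level by level. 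Second, you never normalize the deviation: if an agent $j\in c$ moves from $i_1$ to $i_2$ while another coalition member simultaneously leaves $i_2$, then moving $j$ alone in $a^*$ lands on a resource with occupancy $\left\vert a^*_{i_2}\right\vert+1$ rather than $\left\vert\widetilde{a}_{i_2}\right\vert$, so the hypotheses of Lemma \ref{LemCdom} need not hold for the extracted single move even though $j$ improved in the block. The paper spends an entire phase (its Steps 1 and 2) restricting the feasibility constraints to $a_i\cup\overline{a}_i$ and using Lemma \ref{LemSwitch} to swap coalition members inside $\overline{a}$ until every $a\overline{a}$-chain has length $2$, precisely so that every residual move is a direct transfer from a strictly shrinking resource to a strictly growing one; your appeal to Lemma \ref{LemSwitch} for free intra-coalition permutation points in this direction but is never deployed to that end.

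Third, and this is the substantive gap, the paper's contradiction is not obtained by exhibiting a single dominating move. After the normalization it partitions $M$ into $M^0$, $M^-$, $M^+$, initializes $\widetilde{M}=M^0$, and iterates: maximality of $a^*$ (via Lemma \ref{LemCdom}) forces every chain out of $M^-(k_{max})$ to land in $M^+(k_{max}-1)$ with $\left\vert c\cap\overline{a}_{\upsilon(i)}\right\vert=\left\vert c\cap a^*_{\upsilon(i)}\right\vert+1\geq\left\vert c\cap a^*_i\right\vert$, and the inequality supplied by Lemma \ref{Lemcwelfdom}(b) then forces $\upsilon$ to be a bijection with equality throughout, so that $M^-(k_{max})\cup M^+(k_{max}-1)$ can be absorbed into $\widetilde{M}$ while preserving the hypothesis of Lemma \ref{Lemcwelfdom}(b). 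When $\widetilde{M}=M$ one concludes the $c$-welfare-kernels of $a^*$ and $\overline{a}$ coincide, contradicting Lemma \ref{Lemcwelfdom}(a). None of this pairing and counting argument appears in your proposal; as it stands it is a correct plan with the decisive combinatorial step missing.
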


\begin{proof}
Let $C$ be a given partition coalition structure. Let $\kappa \left(C,A^{f}\right)\subseteq A^{f}$ be such that for each $a\in \kappa \left(C,A^{f}\right)$, there exists no $\overline{a}\in A^{f}$ such that $\overline{a}$ $C$-balance dominates $a$. That is, $\kappa \left(C,A^{f}\right)$ is the set of maximal allocations in $A^{f}$ with respect to the $C$-balance dominance relation. Note that $\kappa \left(C,A^{f}\right) \neq \emptyset$ since the $C$-balance dominance relation is transitive and asymmetric. Let $a\in \kappa \left(C,A^{f}\right)$. We prove the theorem by showing that $a$ is $\mathcal{P}_{=1}(N)$-stable (\textbf{A1}), $\{N\}$-stable (\textbf{A2}), and $C$-stable (\textbf{A3}); the first two of which are already known due to \cite{HARKS} as the first component of $C$-balance dominance relation lexicographically compares the sorted vector of cardinalities of resources.\footnote{For the sake of completeness, the proofs of that $a$ satisfies axioms \textbf{A1} and \textbf{A2} are also given in the Appendix.} Hence, we only need to show that $a$ also satisfies axiom \textbf{A3}.

\medskip

By way of contradiction, suppose that $a$ is not $C$-stable. Then, there exists $c \in C$ such that $c$ blocks $a$. Let $\overline{a}\in A^{f}$ be the allocation induced when $c$ blocks $a$. Then, by Lemma \ref{Lemcwelfdom}(a), we obtain that $\overline{a}$ $c$-welfare-dominates $a$. Thus, the following two statements are true:

\begin{description}
\item[\textbf{(S1)}] $a\in \kappa \left(C,A^{f}\right)$,

\item[\textbf{(S2)}] $\overline{a}\in A^{f}$ and $\overline{a}$ $c$-welfare-dominates $a$.
\end{description}

We will prove that $a$ is $C$-stable by showing the supposition that \textbf{S1} and \textbf{S2} are true leads to a contradiction. The outline of the proof is as follows: Below, in Steps 1 and 2 we update $\overline{a}$ and $A^{f}$ iteratively. We show that after each update \textbf{S1} and \textbf{S2} remain to be true. And once we are done with our updates we derive a contradiction.

\medskip

\noindent \textbf{Step 1:} Update the feasibility constraints as follows:

\begin{itemize}
\item[-] for each $i \in M$, let $f_{i} = a_{i}\cup \overline{a}_{i}$.
\end{itemize}
	
\noindent \textbf{Step 2:} As long as there exists an $a\overline{a}$-chain whose length is bigger than 2, represented as
$$i_{1}\rightarrow \left(  j_{1}\right)  \rightarrow i_{2}\rightarrow \left(j_{2}\right)  \rightarrow i_{3}\rightarrow \cdots \left(  j_{s-1}\right) \rightarrow i_{s},$$
update the feasibility constraint $f_{i_{3}}$ and the allocation $\overline{a}$ as follows:

\begin{itemize}
\item[--] $f_{i_{3}}:=f_{i_{3}}\smallsetminus \left \{  j_{2}\right \}  \cup \left \{j_{1}\right \}  $;
\item[--] $\overline{a}:=\overline{a}\oplus i_{3}\rightarrow \left(  j_{2}\right) \rightarrow i_{2}\rightarrow \left(  j_{1}\right)  \rightarrow i_{3}$.
\end{itemize}

\noindent\underline{Consider the update at Step 1:} After the update note that the set $A^{f}$ becomes smaller but we still have $a,\overline{a}\in A^{f}$. Hence, \textbf{S1} and \textbf{S2} remain to be true.

\medskip

\noindent\underline{Consider an update at Step 2:} After the update note that $\overline{a}$ and $A^{f}$ change but we still have $a,\overline{a}\in A^{f}$. Also, by Lemma \ref{LemSwitch}, for $\overline{a}$ the kernel values remain the same as before. Hence, after the update \textbf{S2} remains to be true.

Let $\widetilde{a}$ be an allocation that was infeasible before the update but that becomes feasible after the update. It is clear that under $\widetilde{a}$, $j_{1}\in \widetilde{a}_{i_{3}}$ and $j_{2}\in \widetilde{a}_{i_{2}}$. To show that after the update \textbf{S1} remains to be true, we need to show that $\widetilde{a}$ does not $C$-balance dominate $a$. Let $\widehat{a}= \widetilde{a} \oplus i_{3}\rightarrow \left(j_{1}\right) \rightarrow i_{2}\rightarrow \left(  j_{2}\right)  \rightarrow i_{3}$. Note that before the update $\widehat{a}$ was feasible. Thus, $\widehat{a}$ does not $C$-balance dominate $a$. But by Lemma \ref{LemSwitch}, for $\widetilde{a}$ and $\widehat{a}$ the kernel values are the same. Then, $\widetilde{a}$ does not $C$-balance dominate $a$ either. Hence, after the update \textbf{S1} remains to be true.

\medskip

Also, note that at Step 2 each time $\overline{a}$ and $A^{f}$ are updated, the length of an $a\overline{a}$-chain becomes smaller. Therefore, Step 2 terminates after a finite number of iterations. And when it terminates, we obtain that:

\begin{itemize}
\item[--] for $a$, $\overline{a}$, $A^{f}$, the statements \textbf{S1} and \textbf{S2} are true;
\item[--] every $a\overline{a}$-chain that remains is of length 2.
\end{itemize}

We are now ready to derive a contradiction. Since every $a\overline{a}$-chain that remains is of length 2, we can divide the set $M$ into the following three subsets:
$$M^{0}=\left\{i\in M|a_{i}=\overline{a}_{i}\right\}, \quad
M^{-}=\left \{i\in M|\overline{a}_{i}\subset a_{i}\right\}, \quad
M^{+}=\left \{i\in M|\overline{a}_{i}\supset a_{i}\right\}.$$
Note that for $a$ and $\overline{a}$, each $a\overline{a}$-chain is of the form $i_{1}\rightarrow \left(j\right) \rightarrow i_{2}$, where $i_{1} \in M^{-}$ and $i_{2}\in M^{+}$. We also divide the sets $M^{-}$ and $M^{+}$ into their partitions with respect to the cardinalities of resources under $a$: Let $M^{-}(k) = \left\{ i\in M^{-} | |a_{i}| = k \right\}$ and $M^{+}(k) = \left\{ i \in M^{+} | |a_{i}| = k\right\}$.

\medskip

To derive a contradiction we use Lemma \ref{Lemcwelfdom}(b). We proceed as follows: First, we identify a set $\widetilde{M}\subset M$ for which the suppositions in Lemma \ref{Lemcwelfdom}(b) are satisfied. Then, using some arguments (below, under the heading \textbf{Iteration}), we show that the set $\widetilde{M}$ can be updated iteratively such that after each iteration its cardinality increases and yet the suppositions in Lemma \ref{Lemcwelfdom}(b) are still satisfied. Eventually, we obtain that $\widetilde{M} = M$ and for each $s \in \left\{1,2,\cdots,n\right\}$,
$$\textstyle \sum \nolimits_{i \in M, |a_{i}| = s} |c \cap a_{i}| = \sum \nolimits_{i \in M, |\overline{a}_{i}| = s} |c \cap \overline{a}_{i}|.$$
This will imply that for $a$ and $\overline{a}$, the $c$-welfare-kernel values are indeed the same, which will contradict that $\overline{a}$ $c$-welfare-dominates $a$ (\textbf{S2}) and complete our proof.

\medskip

Initially, we set $\widetilde{M} = M^{0}$.

\medskip

For $\widetilde{M}=M^{0}$, it is easy to verify that the suppositions in Lemma \ref{Lemcwelfdom}(b) are satisfied. Let $k_{max}=max_{i \in M \smallsetminus \widetilde{M}}\left\vert a_{i} \right\vert$ and $\overline{k}_{max}=max_{i \in M \smallsetminus \widetilde{M}} \left\vert \overline{a}_{i} \right\vert $. Then, by Lemma \ref{Lemcwelfdom}(b) we have $\overline{k}_{max} \leq k_{max}$, and
\begin{equation}
\textstyle \sum \nolimits_{i \in M \smallsetminus \widetilde{M},
\left\vert a_{i} \right\vert = k_{max}}
\left\vert c \cap a_{i} \right\vert \geq
\sum \nolimits_{i \in M \smallsetminus \widetilde{M},
\left\vert \overline{a}_{i} \right\vert = k_{max}}
\left\vert c \cap \overline{a}_{i} \right\vert. \tag{$\ast$}
\end{equation}

\textbf{Iteration:}
Consider $M^{+}\left(k\right)$ where $k \geq k_{max}$. Suppose that $M^{+}\left(k\right) \neq \emptyset$. Let $i\in M^{+}\left(k\right)$. Since $\left\vert a_{i} \right\vert = k$ and $\overline{a}_{i} \supset a_{i}$, we obtain that $\left \vert \overline{a}_{i}\right \vert \geq k+1\geq k_{max}+1$. But this contradicts that $\overline{k}_{max}\leq k_{max}$. Thus, for $k \geq k_{max}$, $M^{+}\left(  k\right) = \emptyset$.

We now restrict our attention to the sets $M^{-}\left(  k_{max}\right)  $ and $M^{+}\left(  k_{max}-1\right)  $: Let $i_{1}\in M^{-}\left(  k_{max}\right)$. Consider an $a\overline{a}$-chain of the form $i_{1}\rightarrow \left(j_{1}\right)  \rightarrow i_{2}$. Let $\widetilde{a}=a\oplus i_{1} \rightarrow \left(  j_{1}\right)  \rightarrow i_{2}$. Note that $\widetilde{a}\in A^{f}$. By Lemma \ref{LemCdom}, if $\left \vert a_{i_{2}}\right \vert\leq k_{max}-2$, or if $\left \vert a_{i_{2}}\right \vert =k_{max}-1$ and $\left \vert c\cap a_{i_{1}}\right \vert \geq \left \vert c\cap a_{i_{2}}\right \vert +2$, we obtain that $\widetilde{a}$ $C$-balance dominates $a$. But this contradicts that $a\in \kappa \left(C,A^{f}\right)$. Thus, $i_{2}\in M^{+}\left(  k_{max}-1\right)  $ and $\left \vert c\cap a_{i_{1}}\right \vert \leq \left \vert c\cap a_{i_{2}}\right \vert +1$. Suppose that there exists an $a\overline{a}$-chain of the form $i\rightarrow \left(j\right) \rightarrow i_{2}$ where $j\neq j_{1}$. Then, $a_{i_{2}}\cup \left \{j,j_{1}\right\} \subseteq \overline{a}_{i_{2}}$. Then, $\left \vert \overline{a}_{i_{2}}\right \vert \geq \left \vert a_{i_{2}}\right \vert +2=k_{max}+1$. But this contradicts that $\overline{k}_{max}\leq k_{max}$. Thus, there exists a one-to-one function $\upsilon:M^{-}\left(k_{max}\right) \rightarrow M^{+}\left(  k_{max}-1\right)$ such that:
\begin{itemize}
\item[--] for each $i\in M^{-}\left(  k_{max}\right)$, there exists an $a\overline{a}$-chain of the form $i\rightarrow \left(j\right)  \rightarrow \upsilon\left(i\right)$.
\end{itemize}
And for $\upsilon$, as argued above, the following holds:
\begin{itemize}
\item[--] for each $i \in M^{-} (k_{max})$, $|c \cap \overline{a}_{\upsilon(i)}| = |c \cap a_{\upsilon (i)}| + 1 \geq |c \cap a_{i}|$.
\end{itemize}
But then the inequality above in ($\ast$) holds only if $\upsilon$ is a bijection (i.e., $|M^{-}(k_{max})|$ $=$ $|M^{+}(  k_{max}-1)|$) and for each $i \in M^{-} \left(k_{max}\right)$, $|c \cap \overline{a}_{\upsilon(i)}| = |c \cap a_{i}|$. Therefore,
$$\textstyle \sum \nolimits_{i \in M \smallsetminus \widetilde{M}, |a_{i}| = k_{max}} |c \cap a_{i}| = \sum \nolimits_{i \in M \smallsetminus \widetilde{M}, |\overline{a}_{i}| = k_{max}} |c \cap \overline{a}_{i}|.$$
But then if we update $\widetilde{M}$ and set $\widetilde{M} := \widetilde{M} \cup M^{-}(k_{max}) \cup M^{+}(k_{max}-1)$, the suppositions in Lemma \ref{Lemcwelfdom}(b) are still satisfied.

\medskip

Therefore, as argued above, we can iterate these arguments and update $\widetilde{M}$ until we obtain that $\widetilde{M}=M$. And then we can conclude that for both $a$ and $\overline{a}$, the $c$-welfare-kernel values are actually the same. However, this contradicts that $\overline{a}$ $c$-welfare-dominates $a$ (\textbf{S2}). Therefore, our initial supposition must be wrong, i.e., allocation $a$ must be $C$-stable. This completes our proof. \qed
\end{proof}

In the proof of Theorem \ref{TeoEx}, we showed that if $a \in \kappa(C, A^f)$, then $a$ satisfies the axioms \textbf{A1}, \textbf{A2}, and \textbf{A3}, given in Section 1. The following example shows that the converse of this statement is not true.

\begin{example}
Consider an ISCG with two resources and fifteen agents such that both resources are accessible to every agent. Consider the following partition coalition structure $C=\{\{1,2,3\}, \{4,5,6\}, \{7,8,9\}, \{10,11,12\}, \{13,14,15\}\}$. Let allocation $a$ be as follows: $a_1 = \{1, 2, 4, 5, 7, 8, 10, 11\}$, $a_2 = \{3, 6, 9, 12, 13, 14, 15\}$. Also, let allocation $\overline{a}$ be as follows: $\overline{a}_1 = \{1, 2, 4, 7, 8, 10, 13, 14\}$, $\overline{a}_2 = \{3, 5, 6, 9,$ $11, 12, 15\}$. It is easy to verify that the allocation $a$ is $\mathcal{P}_{=1}(N)$-stable, $\{N\}$-stable, and $C$-stable. However, $a \notin \kappa(C, A^f)$ since $\overline{a}$ $C$-balance dominates $a$. \hfill $\Diamond$
\end{example}

As a technical note, the following example shows that $C$-balance dominance relation is not a potential function, as we claimed at the beginning of the paper. That is, an allocation $\bar{a}$ induced when a coalition $c \in C$ blocks an allocation $a$, does not necessarily $C$-balance dominate $a$.

\begin{example}
Consider an ISCG with four resources and eighteen agents such that all resources are accessible to every agent. Consider the following partition coalition structure $C = \{\{1, \ldots, 13\}, \{14, \ldots, 18\}\}$. Let allocation $a$ be as follows: $a_1 = \{1, 14, 15, 16\}$, $a_2 = \{2, 3, 4, 5\}$, $a_3 = \{6, 7, 8, 9, 17\}$, $a_4 = \{10, 11, 12,$ $13, 18\}$. Notice that coalition $c = \{1, \ldots, 13\}$ blocks allocation $a$ where the induced allocation $\bar{a}$ is as follows:
$\bar{a}_1 = \{6, 10, 14, 15, 16\}$, $\bar{a}_2 = \{7, 8, 9, 11, 12\}$, $\bar{a}_3 = \{1, 2, 3, 17\}$, $\bar{a}_4 = \{4, 5, 13, 18\}$. Note that agent $13$ gets better off whereas no agent in coalition $c$ gets worse off under $\bar{a}$. However, notice that  allocation $\bar{a}$ does not $C$-balance dominate allocation $a$. This is because $k(a) = k(\bar{a})$ yet $k(c,a) \not\prec k(c,\bar{a})$ since $k(c,a) = (4, 4, 4, 1)$ and $k(c, \bar{a}) = (5, 3, 3, 2)$. \hfill $\Diamond$
\end{example}

It is also natural to ask whether Theorem \ref{TeoEx} can be extended in way that also incorporates overlapping coalitions, instead of a partition coalition structure as in axiom \textbf{A3}. The following example shows that this is not possible even in very restricted settings.

\begin{example}
Consider an ISCG with two resources and three agents such that agent 1 can only access to resource 1, agent 2 can only access to resource 2, and agent 3 can access to both resources. Consider the following coalition structure $C = \{\{1, 3\}, \{2, 3\}\}$. Notice that there are two possible allocation both of which are not $C$-stable. \hfill $\Diamond$
\end{example}

\bibliographystyle{splncs04}
\bibliography{references}

\newpage

\appendix
\section*{Appendix}

\noindent\underline{\textit{Missing Proof of Lemma \ref{LemSwitch}:}}

\begin{proof}
Note that $\overline{a}$ results from $a$ when $j_{1},j_{2}\in c$ switch their
positions under $a$. But when two agents switch positions, no change ensues in
regard to how many agents are assigned to which resource. Thus, by definition,
for $a$ and $\overline{a}$, the kernel values are the same.

About the $\widetilde{c}$-kernels and $\widetilde{c}$-welfare-kernels of
allocations $a$ and $\overline{a}$: Note that their values depend only on how
many agents are assigned to each resource and from which coalition. But that
being fixed, their values do not depend on how members of a coalition are
allocated to the slots designated for that coalition. In other words, at some
allocation, the $\widetilde{c}$-kernel and $\widetilde{c}$-welfare-kernel
values would not change if the labels of members of a coalition were
reshuffled. Thus, when $j_{1},j_{2}\in c$ switch positions, the $\widetilde
{c}$-kernel and $\widetilde{c}$-welfare-kernel values remain the same as before. \qed
\end{proof}

\noindent\underline{\textit{Missing Proof of Lemma \ref{LemCdom}:}}

\begin{proof}
(a) Let $\left \vert a_{i_{1}}\right \vert \geq \left \vert a_{i_{2}}\right \vert
+2$. But then it is clear that the distribution of agents to resources becomes
more even when $j_{1}$ moves from $i_{1}$ to $i_{2}$. Thus, allocation
$\overline{a}$ $C$-balance dominates $a$.

(b) Let $\left \vert a_{i_{1}}\right \vert =\left \vert a_{i_{2}}\right \vert +1$
and $\left \vert c\cap a_{i_{1}}\right \vert \geq \left \vert c\cap a_{i_{2}%
}\right \vert +2$. When $j_{1}$ moves from $i_{1}$ to $i_{2}$, it is clear that
there will not be a change in the kernel value. Thus, $a$ and $\overline{a}$,
cannot be compared with respect to the $C$-balance dominance relation
by looking at the kernel values of $a$ and $\overline{a}$. We thus compare them
looking at the second criterion: How \textquotedblleft
balanced\textquotedblright \ is the distribution of coalition members to
resources under $a$ and $\overline{a}$.

For a coalition $\overline{c}\in C$, $\overline{c}\neq c$, note
that there will not be a change in the distribution of members of this
coalition to resources when $j_{1}\notin \overline{c}$ moves from $i_{1}$ to
$i_{2}$. And for coalition $c$, since $\left \vert c\cap a_{i_{1}}\right \vert
\geq \left \vert c\cap a_{i_{2}}\right \vert +2$, it is clear that the move of
$j_{1}\in c$ from $i_{1}$ to $i_{2}$ leads to a more even distribution of
members of coalition $c$ to resources. Thus, $\overline{a}$ is more
\textquotedblleft balanced\textquotedblright \ than $a$ according to the second
criterion. Thus, allocation $\overline{a}$ $C$-balance dominates $a$. \qed
\end{proof}

\noindent\underline{\textit{Missing proof of Lemma \ref{Lemcwelfdom}:}}

\begin{proof}
(a) Let $\left(  x_{1},x_{2},\cdots,x_{n}\right)  $ and $\left(  y_{1}%
,y_{2},\cdots,y_{n}\right)  $ be, respectively, the $c$-welfare-kernel values
of allocations $a$ and $\widetilde{a}$.

\medskip

If $y_{1}<x_{1}$, $\widetilde{a}$ $c$-welfare-dominates $a$ (by definition)
and we are done. Thus, suppose that $y_{1}\geq x_{1}$.

\medskip

If $y_{1}>x_{1}$, then
there are more agents in $c$ assigned to an $n$-cardinality resource under
$\widetilde{a}$. But then at least one agent in $c$ must be worse off under
$\widetilde{a}$. This contradicts that $\widetilde{a}$ is the allocation
induced when $c$ blocks $a$. Thus, we must have $y_{1}=x_{1}$.

\medskip

Given that $y_{1}=x_{1}$, if $y_{2}<x_{2}$, $\widetilde{a}$ $c$%
-welfare-dominates $a$ (by definition) and we are done. Thus, suppose that
$y_{2}\geq x_{2}$. If $y_{2}>x_{2}$, we get $y_{1}+y_{2}>x_{1}+x_{2}$. But
then there are more agents in $c$ assigned to a resource with cardinality $n$
or $n-1$ under $\widetilde{a}$. But then at least one agent in $c$ must be
worse off under $\widetilde{a}$. This contradicts that $\widetilde{a}$ is the
allocation induced when $c$ blocks $a$. Thus, we must have $y_{2}=x_{2}$.

\medskip

The above arguments can be iterated. Eventually, we obtain that either
$\widetilde{a}$ $c$-welfare-dominates $a$, or $\left(  x_{1},x_{2}%
,\cdots,x_{n}\right)  =\left(  y_{1},y_{2},\cdots,y_{n}\right)  $. But if this
latter case were true, it would be impossible for a member of coalition
$c$ to be better off under $\widetilde{a}$, without another coalition member
becoming worse off. This contradicts that $\widetilde{a}$ is the allocation
induced when $c$ blocks $a$. Thus, we conclude that $\widetilde{a}$
$c$-welfare-dominates $a$.

\medskip

(b) We are given that $\overline{a}$ $c$-welfare-dominates $a$. Note that we
can think of the $c$-welfare-dominance comparison of $\overline{a}$ and $a$
as the sum of two parts: A $c$-welfare-dominance comparison of them when
attention is confined to resources in $\widetilde{M}$. And a $c$%
-welfare-dominance comparison of them when attention is confined to resources
in $M\smallsetminus \widetilde{M}$.

\medskip

The former comparison leads to a tie (due to the equality in the lemma statement).

\medskip

About the latter comparison:

\medskip

Suppose that $\overline{k}_{max}>k_{max}$. Let
$i^{\ast}\in M\smallsetminus \widetilde{M}$ be a $\overline{k}_{max}%
$-cardinality resource under $\overline{a}$. Note that resource $i^{\ast}$'s
cardinality increases when we go from $a$ to $\overline{a}$. Since the only
agents that move are those in $c$, under $\overline{a}$ there is an agent in
$c$ who is assigned to $i^{\ast}$. Thus, under $\overline{a}$, there is at
least one agent in $c$ that is assigned to a $\overline{k}_{max}$-cardinality
resource in $M\smallsetminus \widetilde{M}$, but under $a$, all agents in $c$
that are assigned to resources in $M\smallsetminus \widetilde{M}$ are assigned
to resources whose cardinality is less than $\overline{k}_{max}$. But then it
is clear that, in the $c$-welfare-dominance comparison, when attention is
confined to resources in $M\smallsetminus \widetilde{M}$, $a$ comes superior.
This contradicts that $\overline{a}$ $c$-welfare-dominates $a$. Thus,
$\overline{k}_{max}\leq k_{max}$.

\medskip

Suppose that $\overline{k}_{max}<k_{max}$. But then the inequality in the lemma
statement holds trivially and the proof is complete. Thus, suppose that
$\overline{k}_{max}=k_{max}$.

\medskip

Suppose that the inequality in the lemma statement does not hold.
But then it is clear that, in the $c$-welfare-dominance comparison, when attention
is confined to resources in $M\smallsetminus \widetilde{M}$, $a$ comes superior.
This contradicts that $\overline{a}$ $c$-welfare-dominates $a$. Therefore,
the inequality also holds when $\overline{k}_{max}=k_{max}$. And this completes our proof.
\qed
\end{proof}

\noindent\underline{\textit{Missing proof from Theorem \ref{TeoEx} that $a$ satisfies axiom} \textbf{A1} \textit{(Harks et al. \cite{HARKS}):}}

\begin{proof}
Suppose that $a$ is not a Nash equilibrium. Without loss of generality, let it be such that $\left\{j_{1}\right\}$ blocks $a$, $\overline{a}\in A^{f}$ is the allocation induced, and $\overline{a} = a \oplus i_{1}\rightarrow \left(  j_{1}\right)\rightarrow i_{2}$. Then, $\overline {a}_{i_{2}} = a_{i_{2}}\cup \left\{j_{1}\right\}$ and $\left\vert \overline{a}_{i_{2}}\right\vert < \left\vert a_{i_{1}}\right\vert$. Then, $\left\vert a_{i_{1}}\right\vert \geq \left\vert a_{i_{2}}\right\vert + 2$. But then, by Lemma \ref{LemCdom}(a) we obtain that $\overline{a}$ $C$-balance dominates $a$. This contradicts that $a\in \kappa \left(C,A^{f}\right)$. Therefore, $a$ is a Nash equilibrium.\qed
\end{proof}

\noindent\underline{\textit{Missing proof from Theorem \ref{TeoEx} that $a$ satisfies axiom} \textbf{A2} \textit{(Harks et al. \cite{HARKS}):}}

\begin{proof}
Suppose that $a$ is not Pareto efficient. Then the grand coalition $N$ blocks $a$. Let $\overline{a}\in A^{f}$ be the allocation induced. Let $k(a)=(x_{s})_{s=1}^{t}$ and $k(\overline{a})=(\overline{x}_{s})_{s=1}^{t}$. Note that $x_{1}\leq\overline{x}_{1}$: Otherwise, we get $k(\overline{a}) \prec k(a)$, which contradicts that $a\in \kappa \left(C,A^{f}\right)$. Also, note that we cannot have $x_{1}<\overline{x}_{1}$: Otherwise, under $\overline{a}$, the agents assigned to resources with cardinality $\overline{x}_{1}$ are worse off, contradicting that the grand coalition is better off under $\overline{a}$. Therefore, we obtain that $x_{1}=\overline{x}_{1}$. But it is easy to see that, by using similar arguments, we can iteratively show that $x_{2}=\overline{x}_{2}$, $x_{3}=\overline{x}_{3}$, and so on. Thus, we obtain that $k(a)=k(\overline{a})$. But then no agent can be better off under $\overline{a}$ unless there is an agent who is worse off under $\overline{a}$. This contradicts that the grand coalition is better off under $\overline{a}$. Therefore, $a$ is Pareto efficient. \qed
\end{proof}

\end{document}